\newtheorem{theorem}{Theorem}[section]
\newtheorem{theorem*}{Theorem}
\newtheorem{proposition}[theorem]{Proposition}
\newtheorem{proposition*}[theorem*]{Proposition}
\newtheorem{corollary*}[theorem*]{Corollary}
\theoremstyle{remark}
\newtheorem{remark*}[theorem*]{Remark}
\newtheorem{note*}[theorem*]{Note}
\def\bitcoinA{%
  \leavevmode
  \vtop{\offinterlineskip 
    \setbox0=\hbox{B}%
    \setbox2=\hbox to\wd0{\hfil\hskip-.03em
    \vrule height .3ex width .15ex\hskip .08em
    \vrule height .3ex width .15ex\hfil}
    \vbox{\copy2\box0}\box2}}
\begin{document}

\title{Proof of Reputation}

\subjclass[2010]{68M01, 91-00.}
\keywords{Bitcoin, blockchain, proof-of-work, selfish mining.}

\author{Cyril Grunspan}
\address{L{\'e}onard de Vinci, P{\^o}le Univ., Research Center, Paris-La D{\'e}fense, France}
\email{cyril.grunspan@devinci.fr}
\author{Ricardo P\'erez-Marco}
\address{CNRS, IMJ-PRG,  Paris, France}
\email{ricardo.perez.marco@gmail.com}
\address{\tiny Author's Bitcoin Beer Address (ABBA)\footnote{\tiny If you find this article useful, you can send some anonymous 
satoshis to support our research at the pub.}:
1KrqVxqQFyUY9WuWcR5EHGVvhCS841LPLn} 
\address{\includegraphics[scale=0.3]{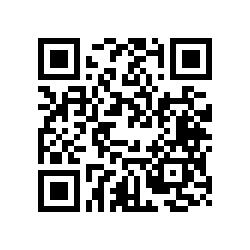}}

\begin{abstract} 
We present the new mining protocol Proof-of-Reputation (PoR) for decentralized Proof-of-Work (PoW) blockchains, 
in particular for Bitcoin. 
PoR combines the classical PoW with the new ingredient of cryptographic reputation. 
The same level of security compared to pure PoW can be achieved with a significant energy consumption 
reduction (of the order of 30\%) for the same security level.  
The proper implementation of a decentralized reputation protocol is suitable with an extra layer of mining 
security: Certified Mining. 
\end{abstract}

{\maketitle}

\section{Introduction and background}

\subsection{Proof-of-Work and decentralization.}

The aim of the present article is to enhance the classical Proof-of-Work (PoW) at the base of
the decentralized consensus algorithm which runs Bitcoin protocol invented by  
Satoshi Nakamoto \cite{N08}. 
Although we have in mind the Bitcoin protocol, our analysis is not restricted to it 
and applies to blockchains based on PoW. 


One of the major criticism of Bitcoin is its important energy use. The security of the protocol 
relies on this energy consumption. Some fundamental reasons, 
based on the Second Law of Thermodynamics, indicate that there is no
viable satisfactory shortcut without energy consumption in order to achieve the same level of security 
and full decentralization. Other propositions as 
Proof-of-Stake (PoS) lack of the main security features as Bitcoin, for example an attacker can rebuild 
the full blockchain, therefore monitorization by a central entity is constantly necessary. So far, 
no blockchain with PoS does function without supervision as Bitcoin does. 


A criterion for a blockchain to be fully decentralized is that no entity or individual is able to stop it.
Observe that this entails a level of cybersecurity unseen elsewhere, for example, no hacking 
can stop it. Also this implies that the network is permisionless and no entity is able to censor transactions. 
For a PoW based blockchain this is true as long as no one is 
in control of more than 50\% of the 
available computing power running the PoW algorithm, which, assuming homogeneous hardware efficiency is equivalent to more 
than 50\% of the energy in use.

\subsection{Decentralization and reputation.}
The energy consumption defends against Sybil attacks of the network, that is, it prevents 
the network being flooded by dishonest validators (miners). 
In an open protocol, decentralization requires that anyone is free to join the network. Censorship
is incompatible with decentralization. This is the reason 
that the so-called ``private blockchains'', where participation 
is approved by a higher entity, cannot be decentralized blockchains. The simplest solution in order to 
avoid a Sybil attack would be to weight the voting rights of any participant (node) by some system 
of reputation. Unfortunately, so far, no decentralized reputation system has been invented. Reputation 
is a subtle notion difficult to implement algorithmically. 


In Nakamoto's PoW proposal, the voting weight of each participant 
is directly proportional to the work provided. This is possible thanks to the pseudo-random properties 
of the hash function used in Bitcoin's PoW algorithm. It has the advantage that we don't need any ``memory'' 
or knowledge of past behavior of the participant. It has the drawback that no reputation of
past well behavior is incorporated. This presents also a security weakness 
since anyone without any prior credentials  
that can achieve more than 50\% of the energy dedicated to mining, and can take control of the whole network.
In some sense, Nakamoto's PoW replaces reputation by actual verifiable work being done in the present. 

As was observed in bitcointalk forum, 
it is interesting to point out that in the early 
Bitcoin code, Satoshi was trying to implement a reputation system (\cite{BitcoinCode}) based on mining 
as his comment suggests \textit{``Add atoms to user reviews for coins created''}. More recently, in the context of 
Proof of Stake, there have been propositions of reputation system as RepuCoin (\cite{repucoin}).


Thinking along these lines, we explore how to go one step further. We want to encode reputation mathematically 
with a secure cryptographic procedure. The essential nature of reputation is based on proper past behavior. 
The most common way of acquiring reputation is by past verifiable work. Other ways to acquire 
reputation is also by endorsement by reputed individuals, but this ``second level reputation'' 
is easier to manipulate and we will discard it for now.


The classical PoW is modulated by a parameter, the \textit{difficulty}, which is adjusted so that the 
validated blocks flow at a constant rate. The \textit{difficulty} is the same for all miners. If we are able 
to assign some type reputation to miners, it is natural to require for those that are reputed a lower 
difficulty. This is how we can implement a reputation bonus. This type of reasoning  is a standard procedure in
everyday dealings. For example, we trust more a dealer that we already know and that has a verifiable past track. 
The security requirements for engaging in trading with him is lower that with an unknown dealer
that may well be a scammer. 


The novel idea at the root of the reputation system presented is to have a variable  difficulty 
parameter for miners. In this way we can improve 
the protocol with a reputation ingredient: We assign a reputation bonus to miners 
that mined blocks in the past, that is, to anyone that has in the past contributed to the security of the 
network by providing energy and work. Because
of his verifiable past track, we allow this miner, to validate blocks with a lower difficulty, 
hence with a lower use of energy.


The nature of reputation makes that this reputation bonus should decay with time. 
Common sense indicates a higher trust in a 
dealer with recent successful deals that someone else with the same successful deals but from many years ago.
In the next section we propose a mathematical implementation of these natural ideas.

\section{Proof-of-Reputation.}\label{sec:PoR}

\subsection{Reputation bonus sequence.}

We consider a positive sequence $(\lambda_n)_{n\geq 0}$ with $0<\lambda_n <1$, $(\lambda_n)$ decreasing, and summable
with total sum $\Lambda$ with
\begin{equation}
0<\Lambda =\sum_{n\geq 0} \lambda_n <1
\end{equation}
The quantity $\Lambda$ is the total reputation bonus of the network and can be adjusted from $0$ to $1$ 
interpolating between a pure PoW and a pure PoR protocols. The quantity $\lambda_n$ is the reputation 
bonus earned by validating the block at depth $H-n$ where $H$ is the current blockchain height.

More precisely, for a miner $x$, we define
\begin{equation*}
\lambda_n(x) =\begin{cases} 
\lambda_n \ \ \text{if } x \text{ mined block } H-n \\ 0  \ \ \ \ \text{otherwise} 
\end{cases}
\end{equation*}

The total bonus reputation of miner $x$ is then
\begin{equation}
\Lambda(x)=\sum_{n\geq 0} \lambda_n(x) \ .
\end{equation}

Observe that the total reputation bonus of the network is 
\begin{equation}
\Lambda= \sum_{x} \Lambda(x) 
\end{equation}

We use the coinbase public payment address as the miner digital identity. In this way the bonus reputation of a miner 
can be computed from the data in the blockchain. If a miner wants to accumulate bonus reputation
he should use always the same payment address. We can easily add the possibility of multiple payment addresses
by requesting a proof of ownership of these other addresses by signing simultaneously with the other addresses
for which the miner wants to earn reputation bonus. This opens the possibility of selling reputation 
bonus that we may not want. Also a simpler way of having a reputation market would be to issue 
a reputation token with each payment in the coinbase transaction, but this is not in the spirit of the proposal presented in 
this article because we want to prevent anyone with large ressources to launch an immediate 51\% attack.

\subsection{Taylored difficulty.}

The output of the Proof-of-Work is in the range of SHA256 which is a 256 binary number. The \textit{target value}
$0\leq D \leq 2^{256}-1$ is the threshold under which the output is successful. The empiral pseudo-randomness 
property of SHA256, shows that the average number of iterations before finding a solution
is $d=2^{256}/D$ that we define as the difficulty. Thus, for a given hardware, the average energy needed for a validation  
is proportional to the difficulty $d=2^{256}/D$.

%
If miner $x$ has a reputation bonus $\Lambda(x)$ his new taylored difficulty will be
\begin{equation*}
d(x)=d(1-\Lambda(x)) 
\end{equation*}
and target value
\begin{equation*}
D(x)=\frac{D}{1-\Lambda(x)} 
\end{equation*}

Thus, with a reputation bonus $\Lambda(x)$ the miner saves a percentage $\Lambda(x)$ of energy to ensure the same
security. Assuming that all miners use their reputation bonus, we can achieve the same level of security saving
a percentage $\Lambda$ of the total energy. We can also argue that for the same energy consumption, we increase the 
security (measured in energy) by a factor $(1-\Lambda)^{-1}$.

\subsection{Structure of the reputation bonus sequence.}

As explained before, it is natural to request that the reputation bonus decays over time, 
this is why we want the sequence 
$(\lambda_n)$ to be decreasing. It is also natural to request decay uniform over time. This leads
to the natural choice of
\begin{equation*}
\lambda_n =\lambda_0  e^{-\chi n} 
\end{equation*}
where $\chi >0$ is the decay constant. We can request that the reputation bonus is halved every year. 
For example, in the case of Bitcoin protocol this leads to a value
$$
\chi= \frac{\log 2}{52560} \approx 1.31877. 10^{-5}
$$
which leads to a decay of $0.19\%$ per day of the bonus reputation.

Then we have 
\begin{equation*}
\Lambda = \frac{\lambda_0}{1-e^{-\chi}} 
\end{equation*}
If we set a target energy savings of 30\%, that is $\Lambda =0.3$, then for the previous example 
with Bitcoin we get
$$
\lambda_0 \approx 3.956.10^{-6}
$$

It also makes sense to implement a reputation bonus sequence $(\lambda_n)$ that does not decay 
exponentially initially, but only after some period of time (like a year for example). Several other 
reasonable choices are possible.

\section{Security analysis.}

\subsection{$51$\% attack.}

We consider a miner $x$ owning a fraction $0<q<1$ of the hashrate. In the long run he can expect 
to earn a reputation bonus of $\Lambda'=q\Lambda$ since he mines on average a proportion $q$ of the blocks.
If everybody else uses his reputation bonus, the apparent hashrate will still be $q$. Then the $51$\%
can still only be performed when $q>1/2$.
But, in the most unfavorable situation where no one else uses 
its reputation bonus, his apparent hashrate is higher. We have

\begin{proposition}
Assuming the previous conditions, the apparent hashrate of the only miner using his reputation bonus 
is $\frac{h'}{1-\Lambda'}=\frac{h'}{1-q\Lambda}$ where 
$h'$ is his hashrate.
\end{proposition}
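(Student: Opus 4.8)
My plan is to compare the *effective* hashrate of miner $x$ — the rate at which he produces valid blocks — with that of the rest of the network, and to express the ratio as an ``apparent'' hashrate, i.e. the value $\tilde q$ such that a naive observer who assumes everyone mines at the base difficulty $d$ would infer $x$ controls fraction $\tilde q$ of the power. Let $h'$ be the true hashrate of $x$, and let $H_0$ be the total hashrate of everyone else, so the true total is $h' + H_0$. The base difficulty $d$ is the expected number of hash trials per block at target $D$; a miner doing $h$ hashes per unit time at difficulty $d$ produces blocks at rate $h/d$.

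\textbf{Main computation.} Since $x$ uses his reputation bonus, his tailored difficulty is $d(1-\Lambda(x))$, and by the long-run argument preceding the statement $\Lambda(x) = \Lambda' = q\Lambda$; hence $x$ produces blocks at rate $\dfrac{h'}{d(1-\Lambda')}$. By hypothesis no one else uses a bonus, so each other miner mines at difficulty $d$ and the rest of the network produces blocks at rate $\dfrac{H_0}{d}$. An observer assuming uniform difficulty $d$ for everyone attributes to $x$ an apparent hashrate $h''$ determined by requiring that $h''/d$ equal $x$'s true block rate, i.e. $h'' = \dfrac{h'}{1-\Lambda'} = \dfrac{h'}{1-q\Lambda}$. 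This is exactly the claimed quantity. One should also note, for consistency with the ``$51\%$ attack'' framing, that the apparent total hashrate is then $h'' + H_0$ and $x$'s apparent fraction is $\dfrac{h'/(1-q\Lambda)}{h'/(1-q\Lambda) + H_0}$, which exceeds his true fraction $\dfrac{h'}{h'+H_0}$ precisely because $1-q\Lambda < 1$.

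\textbf{Where the subtlety lies.} The computation itself is a one-line division; the only real content is justifying the phrase ``apparent hashrate'' — that difficulty is the constant of proportionality between hashrate and block-production rate, so lowering $x$'s difficulty by the factor $(1-\Lambda')$ is indistinguishable (to someone measuring only block counts at the nominal difficulty $d$) from scaling $x$'s hashrate up by $(1-\Lambda')^{-1}$. The one point needing care is the identification $\Lambda(x)=q\Lambda$: this is only an expected/long-run value (the actual $\Lambda(x)$ fluctuates with which of the last blocks $x$ happened to mine), so the statement is implicitly about the stationary regime, exactly as signalled by ``in the long run'' in the paragraph above. Under that reading the proof is complete.
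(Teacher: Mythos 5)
Your proof is correct and is essentially the paper's own argument: the paper writes $\mathbb{E}[\mathbf{T}'] = \frac{d(1-\Lambda')}{h'} = \frac{d}{h'/(1-q\Lambda)}$, identifying the apparent hashrate by rewriting the tailored difficulty as a scaling of the hashrate at base difficulty $d$, which is exactly your block-rate computation stated in terms of expected inter-block times rather than rates. Your added remarks on the observer's inferred fraction and the long-run reading of $\Lambda(x)=q\Lambda$ are sensible glosses but not a different method.
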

\begin{proof}
Let $\mathbf T'$, resp. $\mathbf T$, be the random variable of the time that 
it takes to miner using his reputation bonus, resp. the rest of the miners, to mine a block. Then
we have
\begin{align*}
\mathbb E [\mathbf T'] &= \frac{d(1-\Lambda')}{h'} =\frac{d}{h'/(1-q\Lambda)} \\
\mathbb E [\mathbf T] &= \frac{d}{h}
\end{align*}
where $h$ is the hashrate of the rest of the miners in the network.
\end{proof}

Now we prove:

\begin{proposition}
Assuming the previous conditions, the miner only miner using his reputation bonus succeeds in a $51$\% attack when
$$
q > q_0=\frac{1}{2}- \frac{\sqrt{4+\Lambda^2}-2}{2\Lambda}
$$
\end{proposition}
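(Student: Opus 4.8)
The plan is to translate "the attack succeeds" into a comparison of block-production rates, reduce that comparison to a quadratic inequality in $q$, and then solve it, identifying $q_0$ as the relevant root.

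First I would fix notation. Write $H$ for the total true hashrate, so the sole miner using his bonus has hashrate $h'=qH$ and the remaining miners have $h=(1-q)H$; by hypothesis the latter use no reputation bonus, and in the long run $\Lambda'=q\Lambda$. By the previous propositions $\mathbb E[\mathbf T']=d(1-q\Lambda)/h'$ and $\mathbb E[\mathbf T]=d/h$, i.e. the attacker produces blocks at rate $h'/(d(1-q\Lambda))$ and the honest network at rate $h/d$. In the Nakamoto fork-race analysis, a $51$\% attack succeeds — the attacker can with probability one overtake the honest chain starting from any deficit — exactly when his block-production rate strictly exceeds the honest one, equivalently when his apparent hashrate $h'/(1-q\Lambda)$ exceeds $h$ (equivalently again, when it exceeds half of the total apparent hashrate $h'/(1-q\Lambda)+h$).

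Second, I would substitute $h'=qH$, $h=(1-q)H$, cancel $H$, and clear the positive denominator $1-q\Lambda$ to get $q>(1-q)(1-q\Lambda)$. Expanding and regrouping yields
$$
\Lambda q^2-(2+\Lambda)\,q+1<0 .
$$
The discriminant is $(2+\Lambda)^2-4\Lambda=4+\Lambda^2$, so the roots are $q_\pm=\dfrac{2+\Lambda\pm\sqrt{4+\Lambda^2}}{2\Lambda}$. Since $0<\Lambda<1$ we have $\sqrt{4+\Lambda^2}>0>\Lambda-2$, hence $q_+>1$; thus over the admissible range $0<q<1$ the quadratic is negative precisely when $q>q_-$. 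Setting $q_0:=q_-=\dfrac{2+\Lambda-\sqrt{4+\Lambda^2}}{2\Lambda}=\dfrac12-\dfrac{\sqrt{4+\Lambda^2}-2}{2\Lambda}$ gives the stated threshold. As a sanity check I would note that $0<q_0<\tfrac12$ (the left inequality reducing to $4\Lambda>0$) and that $q_0\to\tfrac12$ as $\Lambda\to0$, recovering the pure-PoW threshold.

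The algebra above is routine; the step that deserves the most care — and, I expect, the real content — is the modeling claim that "the $51$\% attack succeeds" is equivalent to the attacker holding a majority of the \emph{apparent}, difficulty-adjusted hashrate. This rests on applying the classical Nakamoto longest-chain/double-spend race to the effective block-production rates supplied by the previous proposition rather than to the raw hashrates, and hence implicitly on the longest-chain (block-count) fork-choice rule; under a heaviest-accumulated-work rule the relevant weights would instead be the true hashrates and the threshold would revert to $q>1/2$, so it is worth making explicit which convention is in force.
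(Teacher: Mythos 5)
Your proof is correct and follows essentially the same route as the paper: the paper formalizes ``the attack succeeds'' as $\mathbb P[\mathbf T'<\mathbf T]>1/2$ for exponentially distributed block-finding times, computes $\mathbb P[\mathbf T'<\mathbf T]=\alpha'/(\alpha'+\alpha)$, and requiring this to exceed $1/2$ is exactly your condition that the attacker's apparent (difficulty-adjusted) rate exceed the honest rate, leading to the same quadratic $\Lambda q^2-(\Lambda+2)q+1<0$ and the same root $q_0$. Incidentally, your discriminant $4+\Lambda^2$ is the correct one (the paper's proof misprints it as $1+\Lambda^2$ while still using $\sqrt{4+\Lambda^2}$ in the final formula), and your sanity checks $0<q_0<1/2$ and $q_0\to 1/2$ as $\Lambda\to 0$ are sound.
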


\begin{proof}
The condition to succeed a  51\% attack exactly means that it is more likely 
that the attacker miner mines a block earlier than the rest of the network, i.e.
$$
\mathbb P[\mathbf T'<\mathbf T] >1/2
$$
The random variables $\mathbf T$ and $\mathbf T'$ follow exponential distributions 
(see for example \cite{GPM}) with parameters $\alpha'$, resp $\alpha$,
\begin{align*}
\alpha' &=\frac{1}{\mathbb E[\mathbf T']}=\frac{h'}{d(1-q\Lambda)} \\
\alpha  &= \frac{1}{\mathbb E[\mathbf T]}=\frac{h}{d}
\end{align*}
We have
\begin{equation*}
\mathbb P[\mathbf T'<\mathbf T] = \frac{\alpha'}{\alpha'+\alpha} =\frac{\frac{h'}{d(1-q\Lambda)}}{\frac{h}{d}+\frac{h'}{d(1-q\Lambda)}}=
\frac{h'}{(1-q\Lambda)h+h'}=\frac{\frac{h'}{h+h'}}{1-q\Lambda \frac{h}{h+h'}} =\frac{q}{1-q\Lambda (1-q)}
\end{equation*}
Therefore, the condition $\mathbb P[\mathbf T'<\mathbf T] >1/2$ becomes
$$
q>\frac{1}{2} (1-\Lambda q(1-q))
$$
which is equivalent to
$$
\Lambda q^2 -(\Lambda +2)q +1 <0
$$
The discriminant is $\Delta =1+\Lambda^2$ and the quadratic polynomial has one root $>1$ and 
another root $0<q_0<1$ given by the formula in the statement.
\end{proof}

For a value of $\Lambda = 0.3$ this means that $q>q_0=0.4627\ldots$. This is not significantly lower than 50 \%.
The attacker must prepare the attack with time ahead in order to gain the 4\% reputation bonus. We clearly achieve 
the same or betterlevel of security.
From the view pont of the security of 
the network, in some sense, we trade pure hashrate for time and work previously performed.

\subsection{Reputation attack.}

In the Proof-of-Reputation protocol, we will have some miners 
that have acquired along their PoW history
some reputation. The new miners are free to gain this reputation bonus through their work. As we know, most miners 
pool together in ``pools'' in order to mitigate the volatility of their income because a block validation
is a rare event for a low hashrate. Thinking about the pool system in place, we can see a dangerous scenario in the PoR protocol: 
The pool with higher reputation will be more profitable and will attract all the miners. Therefore
will quickly reach more than 50\% hashrate. This can compromise the security of the network.

As we can see, what is really happening is that miners joining the pool take advantage of 
the reputation bonus previously earned by the pool in order to lower their mining difficulty. We need another key idea for 
a proper functioning of a reputation system: We should be able to certify that whoever benefits from any 
reputation bonus is the same virtual identity that has done the past work that provided the reputation 
bonus. In the present situation, there is a cryptographic solution to this problem: \textit{Certified Mining}.

\subsection{Certified mining.}

 We describe \textit{Certified Mining} for a PoW algorithm based on a 
hash function as the one for Bitcoin. We remind that Bitcoin's Pow consists in formatting a block, 
and hash (more precisely, double SHA256) its 
header which contains critical information about the block and the transactions, in particular 
it contains the Merkle
root of the transactions in the body of the block. A variation of the nounce, the extra nounce, or the 
format of the body of the block, provides new output hashes and the goal is to get one hash below the difficulty 
threshold. In their regular operations, a mining pool formats the blocks and provide 
the headers to hash to the miners.
They do format the block with a payment address of the coinbase transaction that they do control (this is where
the block reward is paid). The solution to avoid the reputation usurpation is simple: 
Once the block is formatted as it is done
classically, we require the header of the block to be signed with the secret key of the payment coinbase 
address. This signature is appended at the end to the block header. Each time the nonce is changed, 
a new signature is required. The bulk of the PoW is then the signatures more than the hashes, and this work 
can only be provided by the entity in control of the payment address.
In some sense, the payment public address of the 
coinbase transaction is the id of the miner (he can also have several addresses that can be used to cumulate 
reputation bonuses and provide a proof of ownership).

Obviously, for security reasons, the pool cannot share the secret key of the payment address, and in general, 
only the owner of the payment address can perform the signature. Observe that the signature includes the 
nonce, extra nonce, etc and the main computational task is to sign the header (the double hash is much faster
to do). Hence this modification of the PoW does not allow to subcontract the PoW taking advantage of a 
reputation bonus earned by someone else.


This procedure seems to make impossible a pooling system, but this is not so. A pooling system 
is viable where each one contributes and gets a share weighted by its verifiable reputation. 

\subsection{Transition to a PoR system.}

It is natural to imagine how will be a transition to PoR. Obviously this requires a hardfork of the protocol that 
changes the mining algorithm to allow a taylored difficulty for different miners.
The authors do not advocate for any hardfork of the Bitcoin 
protocol whose ideal state is converging to ossification as a payment protocol. If such a hardfork is implemented, 
it needs to be performed with extraordinary caution. 
The transition to PoR can indeed be realized gradually. 
For example, the value of $\Lambda$ could be increased gradually during a large period of time (for example, 10 years) 
from $0$ to its limit value ($0.3$ in our numerical example). 
For these conservative values, there is little risk for Bitcoin security.

\end{document}